%

\documentclass[onecolumn,11pt,letter]{article}
\usepackage{mathptmx}
\usepackage{epsfig}
\usepackage{epstopdf}
\usepackage{graphicx}
\usepackage{amsfonts}
\usepackage{amsmath}
\usepackage{amsthm}
\usepackage{proof}
\usepackage{latexsym}
\usepackage{amssymb}
\usepackage{color}
\usepackage{comment}
\usepackage[ruled]{algorithm2e}
\usepackage[margin=1in]{geometry}

\DeclareMathAlphabet{\mathcal}{OMS}{cmsy}{m}{n}
\usepackage{tweaklist}

\pagestyle{plain}

\usepackage{graphicx}
\usepackage{amsfonts}
\usepackage{amssymb}

\usepackage{amsmath}

\begin{document}
%

\newtheorem{conj}{Conjecture}
\newtheorem{lemma}{Lemma}
\newtheorem{claim}{Claim}
\newtheorem{definition}{Definition}
\newtheorem{corol}{Corollary}
\newtheorem{theorem}{Theorem}
\newtheorem{observation}[theorem]{Observation}
\newtheorem{algo}{Algorithm}
\newtheorem{corollary}{Corollary}

\newcommand{\akc}{\textsc{Anchored $k$-Core}\xspace}
\newcommand{\AKC}{\textsc{AKC}\xspace}
\newcommand{\sat}{\textsc{Restricted-Planar-3-SAT}\xspace}

\newcommand*\samethanks[1][\value{footnote}]{\footnotemark[#1]}



\title{Preventing Unraveling in Social Networks Gets Harder\thanks{Supported by the European Research
Council (ERC) via grant Rigorous Theory of Preprocessing, reference 267959}}
\author{Rajesh Chitnis\thanks{Dept.\ of Computer Science , University of Maryland at College Park, USA. Supported in part by
NSF CAREER award 1053605, NSF grant CCF-1161626, ONR YIP award N000141110662, DARPA/AFOSR grant FA9550-12-1-0423, a University
of Maryland Research and Scholarship Award (RASA) and a Summer International Research Fellowship from the University of
Maryland. Email: \tt{rchitnis@cs.umd.edu}} \and Fedor V. Fomin\thanks{Department of Informatics, University of Bergen, PB
7803, 5020 Bergen, Norway. Email: \tt{\{fedor.fomin, petr.golovach\}@ii.uib.no}} \and Petr A. Golovach\samethanks}

\maketitle

\begin{abstract}
The behavior of users in social networks is often observed to be affected by the actions of their friends. Bhawalkar et
al.~\cite{bhawalkar-icalp} introduced a formal mathematical model for user engagement in social networks where each individual
derives a benefit proportional to the number of its friends which are engaged. Given a threshold degree $k$ the equilibrium
for this model is a maximal subgraph whose minimum degree is $\geq k$. However the dropping out of individuals with degrees
less than $k$ might lead to a cascading effect of iterated withdrawals such that the size of equilibrium subgraph becomes very
small. To overcome this some special vertices called ``anchors" are introduced: these vertices need not have large degree.
Bhawalkar et al.~\cite{bhawalkar-icalp} considered the \textsc{Anchored $k$-Core} problem: Given a graph $G$ and integers $b,
k$ and $p$ do there exist a set of vertices $B\subseteq H\subseteq V(G)$ such that $|B|\leq b, |H|\geq p$ and every vertex
$v\in H\setminus B$ has degree at least $k$ is the induced subgraph $G[H]$. They showed that the problem is NP-hard for $k\geq
2$ and gave some inapproximability and fixed-parameter intractability results. In this paper we give improved hardness results
for this problem. In particular we show that the \textsc{Anchored $k$-Core} problem is W[1]-hard parameterized by $p$, even
for $k=3$. This improves the result of Bhawalkar et al.~\cite{bhawalkar-icalp} (who show W[2]-hardness parameterized by $b$)
as our parameter is always bigger since $p\geq b$. Then we answer a question of Bhawalkar et al.~\cite{bhawalkar-icalp} by
showing that the \textsc{Anchored $k$-Core} problem remains NP-hard on planar graphs for all $k\geq 3$, even if the maximum
degree of the graph is $k+2$. Finally we show that the problem is FPT on planar graphs parameterized by $b$ for all $k\geq 7$.
\end{abstract}

\section{Introduction}
\label{sec:intro}


A social network can be thought as  the graph of relationships and interactions within a group of individuals. Social networks
play a leading role in various fields such as social sciences~\cite{social-1,social-2}, life sciences~\cite{life-1,life-2} and
medicine~\cite{life-1,medicine}. Social networks today perform a fundamental role as a medium for the spread of information,
ideas, and influence among its members. As an example, Facebook reported a figure of one billion active users as of October
2012~\cite{facebook-one-billion}. An important characteristic of social networks is that the behavior of an individual is
often influenced by the actions of their friends. New events occur quite often in social networks: some examples are usage of
a particular cell phone brand, adoption of a new drug within the medical profession, or the rise of a political movement in an
unstable society. To estimate whether these events or ideas spread extensively or die out soon, we need to model and study the
dynamics of \emph{influence propagation} in social networks. We consider the following model of \emph{user engagement} defined
by Bhawalkar et al.~\cite{bhawalkar-icalp}: there is a single product and each individual has two options of ``engaged" or
``drop out". Initially we assume that all individuals are engaged. There is a given threshold parameter $k$ such that a person
finds it worthwhile to remain engaged if and only if at least $k$ of her friends are still engaged. For example engagement
could represent active participation in a social network, and individuals might drop out and switch to a new social network if
less than $k$ of his friends are active on the current social network. Indeed such a phase transition has been observed in the
popularity of social networks: in India the leading social network was Orkut until Facebook surpassed it in August
2010~\cite{facebook-beats-orkut}.

In our model of \emph{user engagement} all individuals with less than $k$ friends will clearly drop out. Unfortunately this
can be contagious and may affect even those individuals who initially had more than $k$ friends in the social network. An
extreme example of this is given in page 17 of ~\cite{schelling2006micromotives}: consider a path on $n$ vertices and let
$k=2$. Note that $n-2$ vertices have degree two in the network. However there will be a \emph{cascade of iterated
withdrawals}. An endpoint has degree one, it drops out and now its neighbor in the path has only one friend in the social
network and it drops out as well. It is not hard to see that the whole network eventually drops out. In general at the end of
all the iterated withdrawals the remaining engaged individuals form a unique maximal induced subgraph whose minimum degree is
at least $k$. This is called as the \emph{$k$-core} and is a well-known concept in the theory of social networks. It was
introduced by Seidman~\cite{seidman-k-core} and also been studied in various social sciences
literature~\cite{chwe1999structure,chwe2000communication}.

\textbf{A Game-Theoretic Model:} Consider the following game-theoretical model~\cite{bhawalkar-icalp}: each user in a social
network pays a cost of $k$ to remain engaged. On the other hand it receives a profit of $1$ from every neighbor who is
engaged. The ``network effects" come into play, and an individual decides to remain engaged if has non-negative payoff, i.e.,
it has at least $k$ neighbors who are engaged. The $k$-core can be viewed as the unique maximal equilibrium in this model.
Assuming that all the players make decisions simultaneously the model can be viewed as a simultaneous-move game where each
individual has two strategies viz. remaining engaged or dropping out. Consider a graph $G$ defined on the set of players by
adding an edge between two players if and only if they are friends in the network. For every strategy profile $\delta$ let
$S_{\delta}$ denote the set of players
who remain engaged. The payoff for a person $v$ is 0 if she is not engaged, otherwise it is the number of her friends among engaged players minus $k$. 
We can easily characterize the set of pure Nash equilibria for this game: a strategy profile $\delta$ is a Nash equilibrium if
and only if the following two conditions hold:
\begin{itemize}
\item No engaged player wants to drop out, i.e., minimum degree of the induced graph $G[S_{\delta}]$ is $\geq k$
\item No player who has dropped out wants to become engaged, i.e., no $v\in V(G)\setminus S_{\delta}$ has $\geq k$
    neighbors in $S_{\delta}$
\end{itemize}
In general there can be many Nash equilibria. For example, if $G$ itself has minimum degree $\geq k$ then
$S_{\delta}=\emptyset$ and $S_{\delta}=V(G)$ are two equilibria (and there may be more). Recall that the goal of the product
company is to attract as many people as possible. Owing to the fact that it is a maximal equilibrium, the $k$-core has the
special property that it is beneficial to both parties: it maximizes the payoff of every user, while also maximizing the
payoff of the product company. Chwe~\cite{chwe1999structure,chwe2000communication} and
S{\"a}{\"a}skilahti~\cite{saaskilahti2007monopoly} claim that one can reasonably expect this maximal equilibrium even in
real-life implementations of this game.\\

\textbf{Preventing Unraveling:} The unraveling described above in Schelling's example of a path is highly undesirable since
the goal is to keep as many people engaged as possible. How can we attempt to prevent this unraveling? In Schelling's example
it is easy to see: if we ``buy" the two end-point players into being engaged then the whole path becomes engaged.

In general we overcome the issue of unraveling by allowing some ``anchors": these are vertices that remain engaged
irrespective of their payoffs. This can be achieved by giving them extra incentives or discounts. The hope is that with a few
\emph{anchors} we can now ensure a large subgraph remains engaged. This subgraph is now called as the \emph{anchored
$k$-core}: each non-anchor vertex in this induced subgraph must have degree at least $k$ while the anchored vertices can have
arbitrary degrees. We use the notation $\text{deg}_{S}(v)$ to denote the degree of $v$ in the graph $S$. Bhawalkar et
al.~\cite{bhawalkar-icalp} formally defined the \textsc{Anchored $k$-Core} problem :

\begin{center}
\noindent\framebox{\begin{minipage}{4.00in}\textbf{The \textsc{Anchored $k$-Core} Problem (AKC)}\\
\emph{Input }: An undirected graph $G=(V,E)$ and integers $b, k$ \\
\emph{Question}: Find a
set of vertices $H\subseteq V$ of maximum size such that
                \begin{itemize}
                \item There is a set $B\subseteq H$ and $|B|\leq b$
                \item Every $v\in H\setminus B$ satisfies $\text{deg}_{G[H]}(v)\geq k$
                \end{itemize}
\end{minipage}}
\end{center}

The AKC problem deals with finding a small group of individuals whose engagement is essential for the health of the social
network. We call the set $B$ as \emph{anchors}, the set $H$ as the \emph{anchored $k$-core} and the set $H\setminus B$ as the
\emph{anchored $k$-supercore}. The decision version of the \textsc{Anchored $k$-Core} problem deals with anchoring a given
number of vertices to maximize the number of engaged vertices. More formally:
%

\begin{center}
\noindent\framebox{\begin{minipage}{4.00in}\textbf{$p$-AKC}\\
\emph{Input }: An undirected graph $G=(V,E)$ and integers $b, k, p$ \\
\emph{Question}: Do there exist
sets $B\subseteq H\subseteq V$ such that
                \begin{itemize}
		\item $|B|\leq b$ and $|H|\geq p$
                \item Every $v\in H\setminus B$ satisfies $\text{deg}_{G[H]}(v)\geq k$
                \end{itemize}
\end{minipage}}
\end{center}


\textbf{Previous Work:} Bhawalkar et al.~\cite{bhawalkar-icalp} introduced the \textsc{Anchored $k$-Core} problem and gave
some positive and negative results for this problem. Noting that the problem is trivial for $k=1$, they showed that AKC is
polynomial time solvable for $k=2$ but NP-hard for all $k\geq 3$. They also gave a strong inapproximability result: it is
NP-hard to approximate the AKC problem to within an $O(n^{1-\epsilon})$ factor for any $\epsilon>0$. From the viewpoint of
parameterized complexity they showed that for every $k\geq 3$ the $p$-AKC problem is W[2]-hard with respect to $b$. Finally on
the positive side they give a polynomial time algorithm on graphs of bounded treewidth. On graphs with treewidth at most $w$
their algorithm
runs in $O(3^{w}(k+1)^{2w}b^{2})\cdot \text{poly}(n)$ time.\\


\textbf{Our Results:}
It is easy to see that \textsc{Anchored $k$-Core} can be solved in time $n^{b+O(1)}$, as we can try all subsets $B$ of size
$b$ of the set of vertices of the input graph, and for each $B$, find the unique $k$-core $H$ of maximum size such that
$\text{deg}_{G[H]}(v)\geq k$ if $v\in H\setminus B$ by the consecutive deletions of small degree vertices. We show that this
result is optimal in some sense by proving that $p$-AKC problem is W[1]-hard parameterized by $b+k+p$. We also show that the
$p$-AKC problem is W[1]-hard parameterized by $p$ even for $k=3$. This improves the result of Bhawalkar et
al.~\cite{bhawalkar-icalp} (who show W[2]-hardness parameterized by $b$), because our parameter is always bigger since $p\geq
b$. Bhawalkar et al. raised the question of resolving the complexity of the \textsc{Anchored $k$-Core} problem on special
graph classes. In this paper we consider the complexity of the AKC problem on the class of planar graphs. We show that the
\textsc{Anchored $k$-Core} problem is NP-hard on planar graphs for all $k\geq 3$, even if the maximum degree of the graph is
$k+2$. Finally on the positive side we show that the $p$-AKC problem on planar graphs is FPT parameterized by $b$ for all
$k\geq 7$.

\section{Fixed-Parameter Intractability Results}

In this section we give two
parameterized intractability results. Before that we give a brief introduction to parameterized complexity.\\

\textbf{Parameterized Complexity:} \emph{Parameterized Complexity} is basically a two-dimensional generalization of ``P vs.
NP'' where in addition to the overall input size $n$, one studies the effects on computational complexity of a secondary
measurement that captures additional relevant information. This additional information can be, for example, a structural
restriction on the input distribution considered, such as a bound on the treewidth of an input graph or the size of solution
set. For general background on the theory see~\cite{downey-fellows-book,flum-grohe-book,niedermeier-book}. For decision
problems with input size $n$, and a parameter $k$, the two dimensional analogue (or generalization) of P, is solvability
within a time bound of $O(f(k)n^{O(1)})$, where $f$ is a computable function of $k$ alone.
Problems having such an algorithm are said to be \emph{fixed parameter tractable} (FPT). Such algorithms are practical when
small parameters cover practical ranges. The $W$-hierarchy is a collection of computational complexity classes: we omit the
technical definitions here. The following relation is known amongst the classes in the $W$-hierarchy: $FPT=W[0]\subseteq
W[1]\subseteq W[2]\subseteq \ldots$. It is widely believed that $FPT\neq W[1]$, and hence if a problem is hard for the class
$W[i]$ (for any $i\geq 1$) then it is considered to be fixed-parameter intractable.\\



\textbf{W[1]-hardness parameterized by b+k+p:} In this section we show that the $p$-AKC problem is W[1]-hard even when
parameterized by $b+k+p$. We reduce from the well-known W[1]-hard problem \textsc{Clique}


\begin{center}
\noindent\framebox{\begin{minipage}{3.5in}\textbf{\textsc{Clique}}\\
\emph{Input }: An undirected graph $G=(V,E)$ and an integer $\ell$ \\
\emph{Question}: Does $G$ have a clique of size at least $\ell$ ?
\end{minipage}}
\end{center}

\begin{theorem}
The $p$-AKC problem is W[1]-hard parameterized by $b+k+p$ for $k\geq 3$.
\end{theorem}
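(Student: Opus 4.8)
The plan is to give a parameterized reduction from \textsc{Clique}, which is W[1]-hard parameterized by the clique size $\ell$. Fix any constant $k\ge 3$. From an instance $(G,\ell)$ I will build a graph $G'$ together with a budget $b=\ell$, threshold $k$, and target $p=\ell+\binom{\ell}{2}(k-1)$; since $k$ is a constant and $b,p$ are bounded by a function of $\ell$, this is a valid reduction with respect to the combined parameter $b+k+p$. The graph $G'$ contains a \emph{representative} vertex $x_v$ for every $v\in V(G)$, and for every edge $e=uv\in E(G)$ an \emph{edge gadget} $C_e$ that is a clique on $k-1$ fresh vertices, each of which is made adjacent to both $x_u$ and $x_v$. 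Thus every vertex of $C_e$ has exactly $k-2$ neighbours inside $C_e$ and two more among $\{x_u,x_v\}$, so its degree is exactly $k$ precisely when both representatives are present, and it falls below $k$ as soon as one of $x_u,x_v$ is missing.

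For the forward direction, suppose $G$ has a clique $K$ on $\ell$ vertices. I anchor $B=\{x_v:v\in K\}$ and take $H=B\cup\bigcup_{e\subseteq K}C_e$. For each of the $\binom{\ell}{2}$ edges $e$ inside $K$ both endpoints of $e$ are anchored, so every vertex of $C_e$ has degree exactly $k$ in $G'[H]$, while the representatives are anchored and need no degree guarantee; hence $|B|=\ell=b$ and $|H|=\ell+\binom{\ell}{2}(k-1)=p$, as required. The reverse direction rests on a local rigidity property of the gadget: because a vertex $w\in C_e$ has only $k$ potential neighbours in all of $G'$ (its $k-2$ clique-mates together with $x_u$ and $x_v$), if $w\in H$ and $w\notin B$ then all $k$ of these neighbours must lie in $H$; in particular the whole gadget $C_e$ and both representatives $x_u,x_v$ belong to $H$. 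Consequently every \emph{active} edge gadget certifies an edge of $G$ whose two endpoints are represented in the set $U=\{v:x_v\in H\}$, and writing $m$ for the number of active gadgets we get $|H|=|U|+m(k-1)$ with $m\le\binom{|U|}{2}$.

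It remains to force $|U|\le \ell$; this is the crux of the argument and the step I expect to be the main obstacle. The danger is \emph{bootstrapping}: a representative $x_v$ that is \emph{not} anchored can still acquire degree $k$ from the gadgets incident to it, so in a dense instance large self-supporting subgraphs (essentially the $k$-core of $G'$) could survive with no anchors at all, making $|U|$ much larger than $\ell$ and destroying soundness. To rule this out I will augment the construction with a \emph{forcing component} attached to each representative whose purpose is to make the anchor-free $k$-core of $G'$ empty, so that a representative survives only when it is anchored or is kept alive by anchored representatives through its gadgets, and to ensure that these savings cannot chain: each of the $\le \ell$ anchors may seed only its own local gadgets. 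The careful part is to choose this component so that it simultaneously (i) kills every non-anchored representative, giving $|U|=|B\cap\{x_v\}|\le\ell$, and (ii) does not disturb the completeness configuration, in which all $\binom{\ell}{2}$ gadgets of an anchored clique must still survive even though many gadgets share the same representatives. Once $|U|\le\ell$ is established, $m\le\binom{\ell}{2}$ gives $|H|\le\ell+\binom{\ell}{2}(k-1)=p$, and equality forces $|U|=\ell$ and $m=\binom{\ell}{2}$, i.e.\ all $\binom{\ell}{2}$ pairs of the selected representatives are edges of $G$ — exactly a clique of size $\ell$.
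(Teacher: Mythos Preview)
Your reduction is incomplete, and the missing piece is not a detail but the heart of the argument. You correctly identify the obstacle --- that non-anchored representatives $x_v$ can be kept alive by their incident edge gadgets, so when $G$ is dense the entire graph $G'$ may already be a $k$-core with zero anchors --- but you then defer the fix to an unspecified ``forcing component'' attached to each representative. As stated, no such component can work: if $S\subseteq V(G')$ already induces a subgraph of minimum degree $\ge k$ (which happens, e.g., whenever $G$ itself contains a $k$-core, in particular whenever $G$ is a complete graph on many vertices), then \emph{adding} further vertices and edges to $G'$ cannot destroy this; $G'[S]$ is unchanged, so the anchor-free $k$-core of the augmented graph still contains $S$. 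Hence your stated goal of making the anchor-free $k$-core of $G'$ empty by attaching gadgets to the $x_v$'s is impossible in principle, and with it the bound $|U|\le\ell$ collapses. Any repair would have to \emph{modify} the connections between gadgets and representatives (e.g.\ interposing degree-limited vertices), which changes the construction substantially and would require a fresh soundness analysis you have not supplied.

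For contrast, the paper sidesteps the bootstrapping problem entirely by letting $k$ vary with the parameter: it subdivides every edge of $G$ and sets $k=\ell-1$, $b=\binom{\ell}{2}$, $p=\ell+\binom{\ell}{2}$. The subdivision vertices have degree~$2$, so for $k\ge 3$ they can never lie in the supercore, and every supercore vertex must be an original vertex of $G$. Each such vertex needs $k=\ell-1$ anchored subdivision neighbours, and each anchor is shared by at most two supercore vertices, giving the double count $2\binom{\ell}{2}\ge |S|(\ell-1)$, i.e.\ $|S|\le\ell$; together with $|S|\ge p-b=\ell$ this forces $|S|=\ell$ and hence a clique. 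The key difference is that with $k=\ell-1$ the anchor-free $k$-core of the subdivided graph is automatically empty (all non-original vertices have degree~$2$, and original vertices are separated by them), so no forcing gadget is needed.
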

\begin{proof}
Consider an instance $G=(V,E)$ of \textsc{Clique}. Let $V=\{v_1,v_2,\ldots,v_n\}$. For each edge $e=\{v_i,v_j\}$ we subdivide
it and add a new vertex $w_{\{i,j\}}$. Let this new graph be $G'=(V',E')$. Define $b=\binom{\ell}{2},\ k=\ell-1$ and
$p=\ell+\binom{\ell}{2}$. The claim is the instance $(G',b,k,p)$ of $p$-AKC answers YES and only if the instance $(G,\ell)$ of
\textsc{Clique} answers YES.

Suppose $G$ has a clique of size $\ell$ say $C=\{v_1,v_2,\ldots,v_{\ell}\}$. Pick as anchors in $G'$ the $\binom{\ell}{2}$
vertices of the type $w_{\{i,j\}}$ for $1\leq i\neq j\leq \ell$. It is easy to see that these anchors along with the vertex
set $C$ form a $k$-core of size $\ell+\binom{\ell}{2}$ for the instance $(G',b,k,p)$ of $p$-AKC.

Suppose that the instance $(G',b,k,p)$ of $p$-AKC has a solution. Since any newly added vertex has degree two we know that
every vertex in the \emph{$k$-supercore} must be from $V$. Further each vertex from the supercore needs at least $k$ anchors,
and each anchor can be shared between at most two vertices from $k$-supercore. Let $S$ denote the $k$-supercore. Counting the
number of anchors (with repetitions) in two ways we have $2\binom{\ell}{2} \geq |S|(\ell-1)$, i.e., $\ell \geq |S|$. But we
know that $|S|\geq p-b\geq \ell$, and hence $|S|=\ell$. Without loss of generality let the vertices of $S$ be
$v_1,v_2,\ldots,v_{\ell}$. Each $v_i$ has degree at least $k=\ell-1$ in $H=S\cup B$, and so $S$ must form a clique in $G$.
\end{proof}

\section{W[1]-hardness parameterized by $p$}

Bhawalkar et al. (Theorem 3 in~\cite{bhawalkar-icalp}) showed that the $p$-AKC problem is W[2]-hard parameterized by $b$ for
every $k\geq 3$. In this section we prove that it is in fact W[1]-hard parameterized by $p$ for $k=3$.
This improves on the
result of Bhawalkar et al.~\cite{bhawalkar-icalp} in two aspects: firstly our parameter is stronger since $p\geq b$ by
definition. Also we show hardness for a smaller complexity class since it is known that $FPT\subseteq W[1]\subseteq W[2]$.

\begin{figure}[h]
 \centering
 \includegraphics[height=4in]{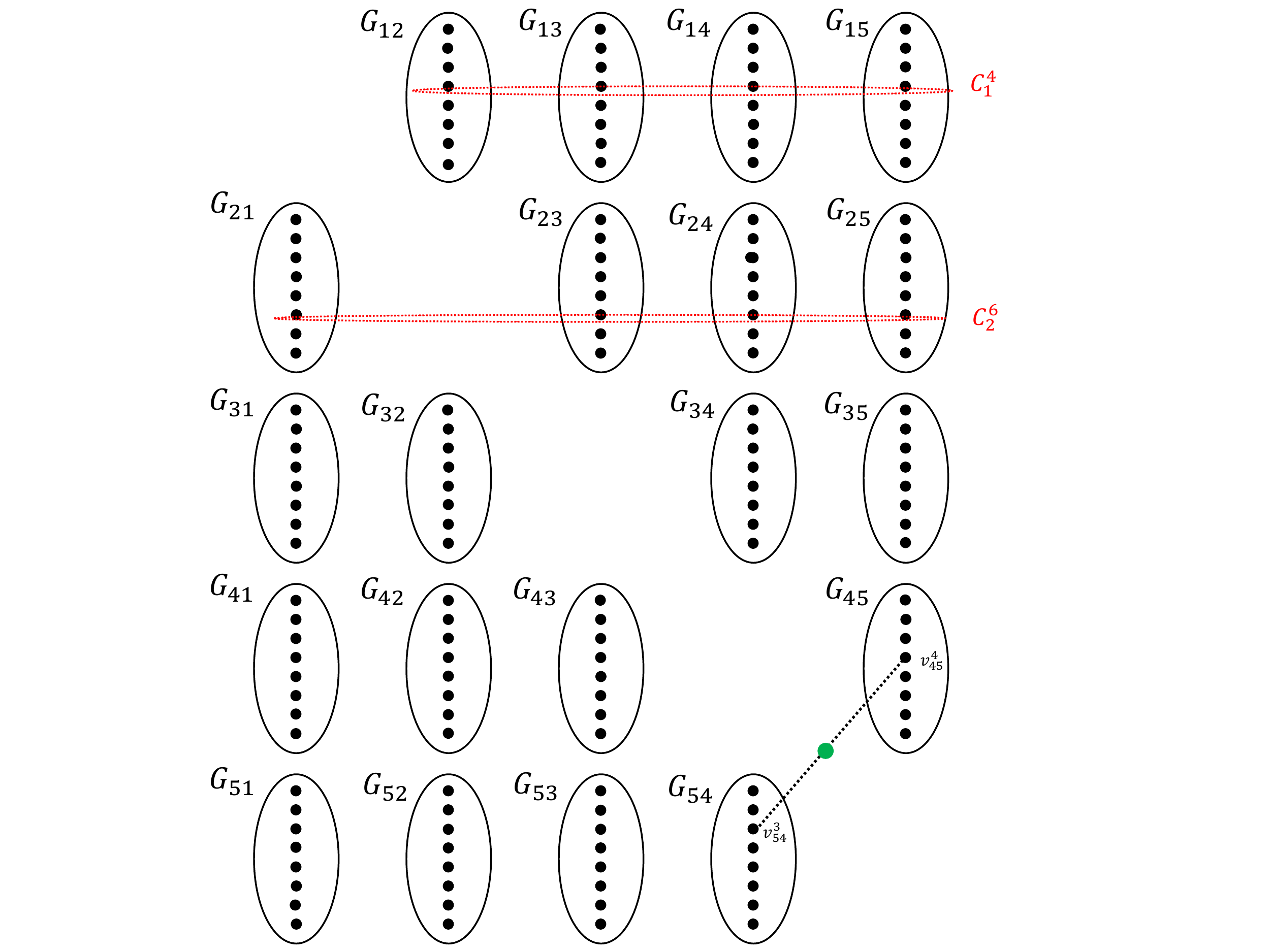}
 \caption{The graph $G'$ constructed in Theorem~\ref{thm:w[1]-hardness-main} for the special case when $n=8$ and $\ell=5$.
\label{fig:w[1]-hardness}}
\end{figure}

\begin{theorem}
The $p$-AKC problem is W[1]-hard parameterized by $p$ for $k=3$. \label{thm:w[1]-hardness-main}
\end{theorem}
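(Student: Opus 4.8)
The plan is to give a parameterized reduction from \textsc{Clique} (with parameter $\ell$) to $p$-AKC with the threshold fixed at $k=3$, arranged so that both $p$ and $b$ are functions of $\ell$ alone; since $p$ is the parameter of the target instance, this is exactly what a parameterized reduction with parameter $p$ demands. As in the proof of the previous theorem I would begin from the edge-subdivided graph, inserting a vertex $w_{\{i,j\}}$ on every edge of $G$. The difficulty to keep in mind from the outset is that the constant threshold $k=3$ no longer coincides with the clique degree $\ell-1$, so the clean bookkeeping of the previous proof must be replaced by a more rigid gadget.

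Concretely, I would let the supercore carry the $\binom{\ell}{2}$ subdivision vertices $w_{\{i,j\}}$ of the clique and let the original vertices $v_i$ be the anchors, setting $b=\ell$ (plus a constant for any global gadget vertices) and $p=\ell+\binom{\ell}{2}$ up to that same additive constant, so that the supercore is forced to have size $\binom{\ell}{2}$. Each subdivision vertex already has its endpoints $v_i,v_j$ as neighbours, so to reach degree $3$ I would attach to every $w_{\{i,j\}}$ one further neighbour through a small gadget whose only inexpensive realization is to have $v_i$ and $v_j$ anchored. The forward direction is then routine: given a clique $C$ of size $\ell$, I anchor the $\ell$ vertices of $C$ together with the fixed gadget vertices, place the $\binom{\ell}{2}$ subdivision vertices internal to $C$ into the supercore, and observe that each such $w_{\{i,j\}}$ sees its two anchored endpoints and its gadget neighbour, giving degree exactly $3$ and a core of the required size $p$.

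The backward direction is the crux, and I expect the main obstacle to lie precisely there. Starting from an arbitrary feasible $H=B\cup S$ with $|B|\le b$ and $|S|\ge p-b=\binom{\ell}{2}$, I must show that $S$ consists only of subdivision vertices and that the corresponding edges of $G$ are spanned by at most $\ell$ vertices; a counting argument then finishes the equivalence, since $\ell$ vertices span at most $\binom{\ell}{2}$ edges, with equality forcing a clique. The trouble is that when $k$ is the fixed constant $3$, the naive double count $2b\ge k\,|S|$ that sufficed when $k=\ell-1$ is far too weak to pin down $|S|$: by itself it does not exclude a ``parasitic'' supercore built from an arbitrary minimum-degree-$3$ subgraph of $G$, whose vertices and incident subdivision vertices could masquerade as a core of size $\binom{\ell}{2}$ while using almost no anchors, thereby certifying a clique where none exists. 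The entire weight of the argument therefore falls on designing the gadget so that an original vertex $v_i$ can contribute to the degree of an incident $w_{\{i,j\}}$ only when $v_i$ is \emph{anchored}, and never when it merely sits in the supercore; this caps the endpoints of all chosen subdivision vertices at the $b=\ell$ anchored vertices and collapses every parasitic alternative. Checking that the gadget enforces this rigidity against all of the (constantly many) ways a vertex could otherwise enter the supercore, while keeping $p$ and $b$ dependent on $\ell$ only, is the step I would expect to be the most delicate.
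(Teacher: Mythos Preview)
Your proposal correctly identifies the central obstacle but does not resolve it; as written, it is a plan rather than a proof, and the missing piece is exactly the one you flag as ``most delicate''.

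The concrete problem is this: you want a gadget attached to each $w_{\{i,j\}}$ whose ``only inexpensive realization is to have $v_i$ and $v_j$ anchored''. But membership in $H$ is all a gadget can detect via adjacency; there is no local way to distinguish ``$v_i$ is an anchor'' from ``$v_i$ is in the supercore''. Any third neighbour you supply to $w_{\{i,j\}}$ (a private pendant, a shared universal vertex, etc.) leaves the two endpoint slots to be filled by $v_i,v_j\in H$ with no constraint on \emph{how} they entered $H$. Consequently, if $G$ has any induced subgraph $S$ of minimum degree~$3$, the set $S$ together with all subdivision vertices internal to $S$ (plus whatever constant-size gadget pieces you added) is a valid anchored $3$-core using essentially no anchors on original vertices --- the parasitic solution you yourself describe. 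Nothing in your sketch rules this out, and the counting argument you propose ($\ell$ endpoints spanning $\binom{\ell}{2}$ edges) never gets off the ground because you cannot bound the number of original vertices in $H$ by $b$.

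The paper sidesteps this by reversing the roles and adding global structure: it makes $\ell(\ell-1)$ indexed copies $G_{ij}$ of $V(G)$, threads each ``choice'' through a cycle $C^{j}_{i}$ that already supplies degree~$2$, and lets the degree-$2$ \emph{green} subdivision vertices be the anchors ($b=\binom{\ell}{2}$, $p=3b$). Because every green vertex has degree~$2$ in $G'$, it can only appear in $H$ as an anchor; the tight double count $2b\ge |S|\ge p-b=2b$ then forces the supercore to be exactly the $2b$ cycle vertices paired up by green anchors, and the cycle constraints propagate a single index choice per row into a clique. The mechanism that makes this work --- forcing a whole class of vertices to have degree $\le 2$ in $G'$ so they are \emph{provably} anchors whenever present --- is precisely what your scheme lacks for the original vertices $v_i$.
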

\begin{proof}
We again reduce from the \textsc{Clique} problem. Consider an instance $(G=(V,E), \ell)$ of \textsc{Clique} where
$V=(v^1,v^2,\ldots,v^n)$. Construct a new graph $G'=(V',E')$ as follows. For each $1\leq i\neq j\leq \ell$ make a copy
$G_{ij}$ of the vertex set $V$ (do not add any edges). Let the vertex $v^r$ in the copy $G_{ij}$ be labeled $v^{r}_{ij}$. Add
the following edges to $G'$:
\begin{itemize}
\item For each $1\leq i\neq j\leq \ell$ and $r,s\in [\ell]$ we add an edge between $v^{r}_{ij}$ and $v^{s}_{ji}$ if and
    only if $v^{r}v^{s}\in E$. Subdivide each such edge by adding a new vertex called \emph{green}.
\item For each $1\leq i, j\leq \ell$ add the cycle $v^{j}_{i1}-v^{j}_{i2}-\ldots
    v^{j}_{i,i-1}-v^{j}_{i,i+1}-\ldots-v^{j}_{i\ell}-v^{j}_{i1}$. Let us denote this cycle by $C^{j}_{i}$.
\end{itemize}
This completes the construction of $G'$. Let $k=3,\ b=\binom{\ell}{2}$ and $p=3b$. The claim is that the instance $(G,\ell)$
of \textsc{Clique} answers YES if and only if the instance $(G',b,k,p)$ of $p$-AKC answers YES.

Suppose $G$ has a clique of size $\ell$, say $C=\{v_1,v_2,\ldots,v_{\ell}\}$. For $1\leq i\neq j\leq \ell$ pick the vertex
$v^{i}_{ij}$ from $G_{ij}$. This gives a set $S$ of $2b$ vertices. It is easy to see that $G'[S]$ consists of disjoint cycles,
and hence is regular of degree two. Now for every $v^{i}_{ij}$ there is an unique vertex $v^{j}_{ji}$ which is connected to it
by a subdivided edge in $G'$. The green vertices of this subdivided edges become the anchors. Note that we pick exactly
$\frac{|S|}{2} = b$ anchors. It is easy to see that each vertex of $S$ now has degree three in the resulting induced subgraph,
and hence $S$ becomes the $k$-supercore. Therefore the instance $(G',b,k,p)$ of $p$-AKC answers YES.

Now suppose that the instance $(G',b,k,p)$ of $p$-AKC answers YES. Let $S$ be the $k$-supercore and $B$ be the set of anchors.
Then we know that $|S|\geq p-b =2b$. Each green vertex has degree two in $G'$, and hence cannot be in $S$. Each vertex in $S$
needs at least one green vertex to achieve degree three in $G'[S\cup B]$. But any green vertex can be used by at most two
vertices in $S$. Therefore we have $2b\geq 2|B|\geq |S|\geq 2b$ which implies $|B|=b$ and $|S|=2b$. Hence the budget must come
from the green vertices only, and that the vertices of $S$ form a \emph{matching} under the relation of sharing a common green
vertex. Without loss of generality let $v^{i_1}_{12}$ and $v^{i_2}_{21}$ be two vertices in $S$ such that they share a green
vertex from $B$. Now we know that $v^{i_1}_{12}$ has degree at least three in $G'[B\cup S]$ but cannot be incident to any
other green vertex. So we need to include $v^{i_1}_{13}$ and $v^{i_1}_{1\ell}$ in $S$. Again each of these two vertices can be
incident to at most one green vertex in $G'[B\cup S]$ and ultimately this means that we must have $C^{i_1}_{1}\subseteq
G'[B\cup S]$. For $2\leq j\leq \ell$ we know that the vertex $v^{i_1}_{1j}$ needs one more edge to achieve degree at least
three. This edge must be towards a green vertex which is adjacent to some vertex in $G_{j1}$, say $v^{i_j}_{j1}$. By reasoning
similar to above we must have $C^{i_j}_{j}\subseteq G'[B\cup S]$ for every $2\leq j\leq \ell$. So we have chosen $2b$ vertices
in $S$, which is the maximum allowed budget. Therefore $S\cap G_{jj'} = \{v^{i_j}_{jj'}\}$ for every $1\leq j\neq j'\leq
\ell$.

The claim is that the set $\{v^{i_1},v^{i_2},\ldots,v^{i_{\ell}}\}$ forms a clique in $G$. Consider any two indices $1\leq
q\neq r\leq \ell$. We know that the vertex $v^{i_q}_{qr}$ is in $S$ and has degree two in $G'[B\cup S]$ as it is in the cycle
$C^{i_q}_{q}$. To achieve degree three it must be incident to some green vertex. Also it must share this green vertex with
some other vertex from $G_{rq}\cap S$. But we know that $G_{rq}\cap S= \{v^{i_r}_{rq}\}$. Therefore $v^{i_q}_{qr}$ and
$v^{i_r}_{rq}$ share a green vertex, i.e., $v^{i_q}$ and $v^{i_r}$ are adjacent in $G$, i.e, the vertices
$\{v^{i_1},v^{i_2},\ldots,v^{i_{\ell}}\}$ form a clique in $G$.
\end{proof}

\section{NP-hardness Results on Planar Graphs}

Bhawalkar et al.~\cite{bhawalkar-icalp} raised the question of investigating the complexity of the \textsc{Anchored $k$-Core}
problem on special cases of graphs such as planar graphs. In this section we provide some answers by showing NP-hardness
results for planar graphs for $k\geq 3$. The case $k\geq 4$ can be handled by a single reduction, but $k=3$ is more
complicated and requires a separate reduction. We reduce from the following problem which was shown to be NP-hard by Dahlhaus
et al.~\cite{planar-3-sat}:

\begin{center}
\noindent\framebox{\begin{minipage}{5in}\textbf{\textsc{Restricted-Planar-3-SAT}}\\
\emph{Input }: A Boolean CNF formula $\phi$ such that
                \begin{itemize}
                \item Each clause has at most $3$ literals
                \item Each variable is used in at most $3$ clauses
                \item Each variable is used at least once in positive and at least once in negation
                \item The graph $G_{\phi}$ (described below) is planar
                \end{itemize}
\emph{Question}: Is the formula $\phi$ satisfiable ?
\end{minipage}}
\end{center}

Consider an instance $\phi$ of \textsc{Restricted-Planar-3-SAT} with variables $x_1, x_2,\ldots, x_n$ and clauses
$C_1,C_2,\ldots,C_m$. We associate the following graph $G_{\phi}$ with $\phi$:
\begin{itemize}
\item For each $1\leq i\leq n$ introduce the vertices $r_i, x_i$ and $\overline{x_i}$. Add the edges $r_{i}x_{i}$ and
    $r_{i}\overline{x_i}$.
\item For each $1\leq j\leq m$ introduce the vertex $c_j$.
\item For each $1\leq i\leq n$ and $1\leq j\leq m$ add an edge between $x_i$ (or $\overline{x_i}$) and $c_j$ iff $x_i$ (or
    $\overline{x_i}$) belongs to the clause $C_j$.
\end{itemize}

\subsection{NP-hardness on Planar Graphs for $k=3$}

In this section we show that the \textsc{Anchored $k$-Core} problem is NP-hard on planar graphs for all $k=3$, even in graphs
of maximum degree $5$.

\begin{figure}[h]
 \centering
 \includegraphics[height=3.25in]{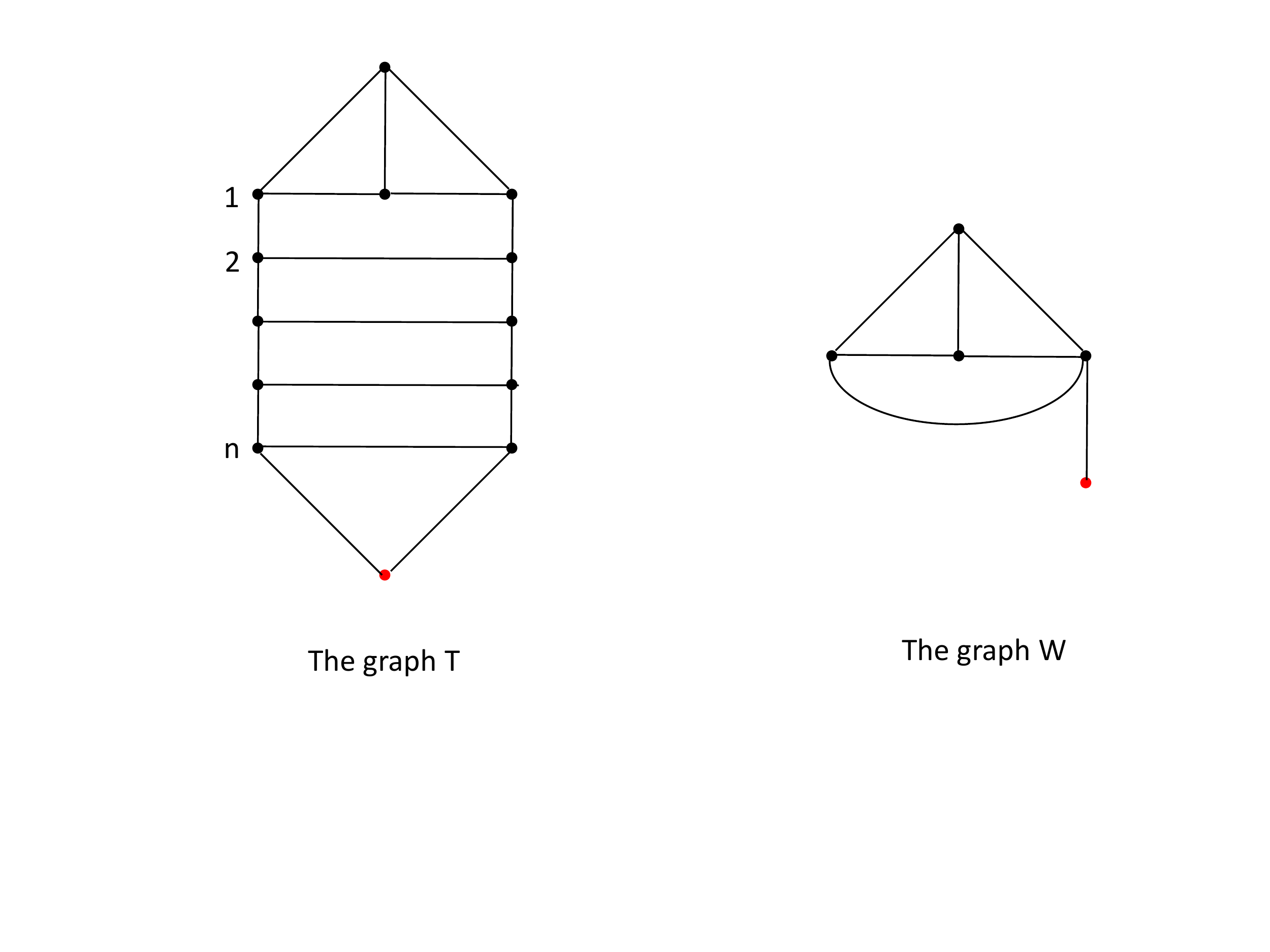}
 \vspace{-25mm}
 \caption{The graphs $T$ and $W$ used in construction of $G$ in Theorem~\ref{thm:planar-k-3}. Note that $T$ has $2n+3$ vertices, and exactly one vertex
 has degree two. The graph $W$ has exactly one vertex
of degree one.
 \label{fig:t-and-w}}
\end{figure}

\begin{figure}[h]
 \centering
 \includegraphics[height=5in]{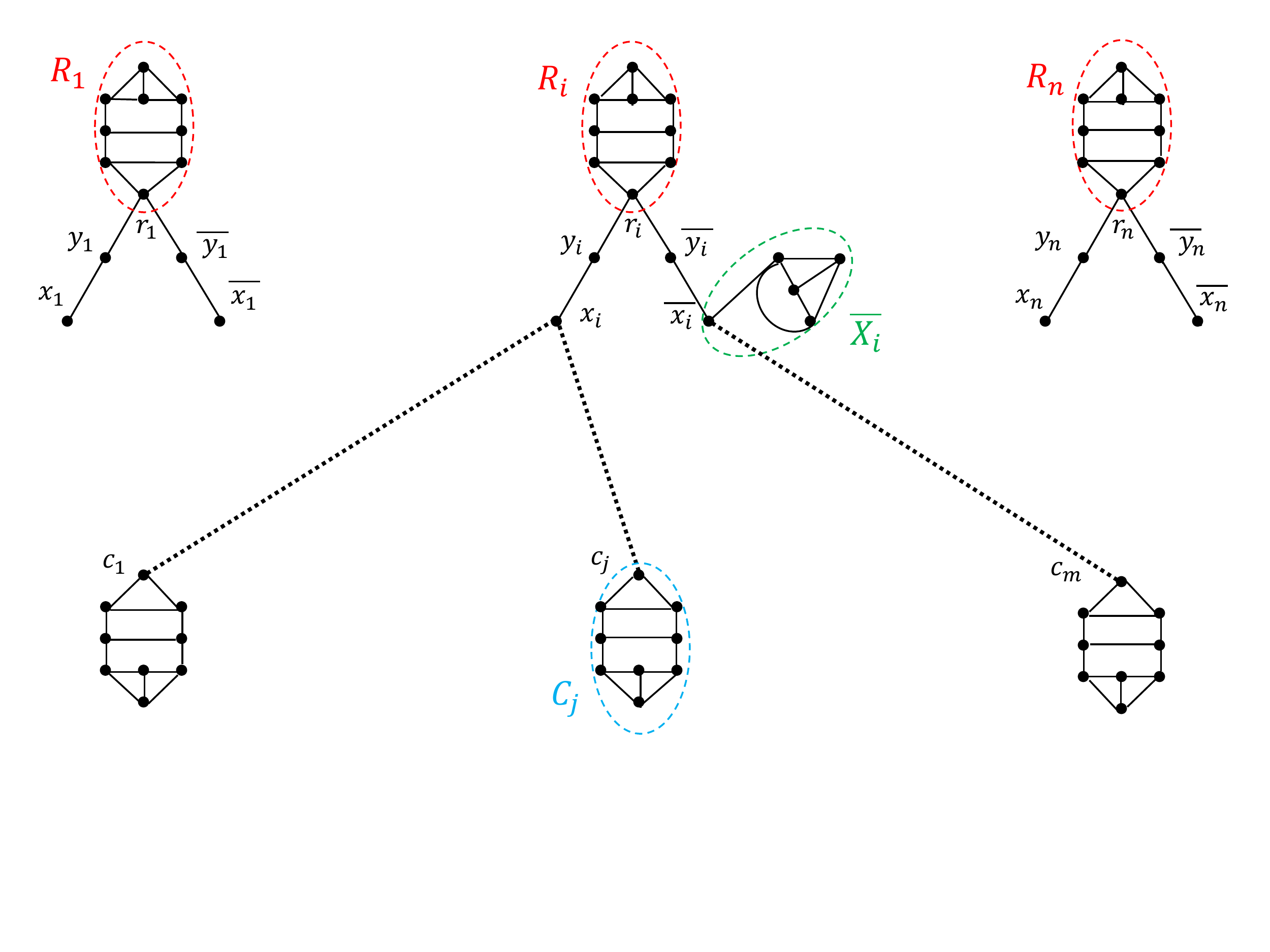}
 \vspace{-30mm}
 \caption{The graph $G$ constructed in Theorem~\ref{thm:planar-k-3}.
 \label{fig:planar-k-3}}
\end{figure}

\begin{theorem}
For $k=3$ the \textsc{Anchored $k$-Core} problem is NP-hard even on planar graphs of maximum degree $5$.
\label{thm:planar-k-3}
\end{theorem}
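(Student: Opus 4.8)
The plan is to reduce from \textsc{Restricted-Planar-3-SAT}, using the associated planar graph $G_\phi$ as the backbone and attaching the two gadgets $T$ and $W$ of Figure~\ref{fig:t-and-w} to build the instance $G$ of Figure~\ref{fig:planar-k-3}. I would keep $k=3$, fix the anchor budget $b$ so that there is just enough to make one ``truth choice'' per variable, and calibrate the target size $p$ so that an anchored $3$-core of size $\geq p$ can exist only when every gadget is pulled entirely into the core. The role of $T$ is that of a \emph{reward amplifier}: I would arrange it to be $3$-regular except at its unique degree-$2$ attachment vertex, so that the moment this vertex gains one extra surviving neighbor it reaches effective degree $3$ and the whole of $T$ (all $2n+3$ vertices) cascades into the $3$-core. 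This makes each activated attachment point worth a large fixed number of vertices, so that reaching $p$ \emph{forces} every attachment point to be activated. The gadget $W$, whose tip has degree one, is used to complete degrees and to seed the cascade while keeping the maximum degree at $5$. Since $G_\phi$ is planar by hypothesis and the gadgets are attached locally, they can be drawn inside the faces of a planar embedding, so $G$ is planar.

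For the clause gadgets I would exploit that $k=3$ and the degree cap is exactly $5$. A clause vertex $c_j$ with three literals is joined to its three literal vertices together with two always-surviving support neighbours; it then has degree $5$ and survives iff $2+1=3$ of its neighbours survive, i.e.\ iff \emph{at least one} of its literal neighbours survives. This realizes the required ``OR'' semantics at a clause and simultaneously explains the maximum degree $5$; clauses with fewer literals get the same two supports and correspondingly smaller degree. For the variable gadgets, the two literal vertices $x_i$ and $\overline{x_i}$ hang off $r_i$, and I would design the local structure so that activating one literal lets it survive and feed its clauses, while the tight budget (one choice per variable) prevents both literals of a single variable from surviving at once. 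This \emph{mutual exclusion} is exactly what guarantees that the assignment extracted from a solution is consistent.

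The two directions of correctness then run as follows. Given a satisfying assignment, I would anchor the chosen true literal of each variable; each such literal survives, supplies the missing third survivor to every clause vertex of a clause it satisfies, and hence activates the attachment vertex of every $T$, so the resulting $3$-core has size $\geq p$. Conversely, suppose there is an anchored $3$-core of size $\geq p$ using at most $b$ anchors. Because $p$ is calibrated to force all $T$ gadgets into the core and the budget is tight, the anchors must be spent essentially one-per-variable on literal vertices, yielding a choice of one literal per variable; the mutual-exclusion property makes this choice consistent, and since every surviving clause vertex needs a surviving literal neighbour in its OR-gadget, each clause contains a chosen (true) literal. Hence $\phi$ is satisfiable, and computing whether the maximum anchored $3$-core has size $\geq p$ is NP-hard.

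The main obstacle, and the reason $k=3$ requires a separate reduction from the uniform $k\geq 4$ case, is the extreme tightness of the degree constraints: with $k=3$ every non-anchor vertex needs three surviving neighbours while only two units of slack remain below degree $5$, leaving little room to enforce per-variable mutual exclusion and clause ``OR'' semantics without triggering unintended cascades. I expect the hardest part of the write-up to be the backward direction, namely ruling out ``cheating'' anchorings that try to exploit the reward gadgets, or to prop up both literals of a variable through the clause cascade, and thereby reach size $p$ without corresponding to a genuine satisfying assignment. Establishing this should require a careful counting/exchange argument showing that the budget is spent exactly as intended and that the cascade cannot manufacture extra survivors beyond those accounted for by a legitimate truth assignment.
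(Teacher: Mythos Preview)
Your high-level plan matches the paper's: reduce from \textsc{Restricted-Planar-3-SAT}, attach a copy of $T$ at each clause vertex so that $c_j$ has two guaranteed neighbours and survives iff at least one literal neighbour does, and use $W$ to pad the degree of a literal that occurs in only one clause. The maximum degree $5$ indeed arises at a $3$-literal clause vertex, exactly as you say.

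However, there is a concrete missing ingredient that the paper relies on and that your sketch does not supply: the edges $r_ix_i$ and $r_i\overline{x_i}$ are \emph{subdivided}, creating new vertices $y_i$ and $\overline{y_i}$ of degree exactly~$2$. These $2n$ vertices are the only vertices of degree below $3$ in the whole construction; a copy of $T$ is also attached at each $r_i$, so losing $r_i$ costs $2n+3$ vertices. The anchors are placed on the $y_i$/$\overline{y_i}$, \emph{not} on the literal vertices. This is what drives the backward direction cleanly: each $y_i,\overline{y_i}$ can be in the core only as an anchor; if neither is anchored then $r_i$ drops and its $T$-gadget of size $2n+3$ unravels, overshooting the allowed deficit of $2n$; and the budget $b=n$ forces exactly one of $\{y_i,\overline{y_i}\}$ to be anchored per variable. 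Mutual exclusion and the location of the anchors are thus \emph{forced}, and no exchange argument is needed.

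Without the subdivision your construction, as described, has minimum degree $\geq 3$ everywhere (each literal ends up with degree $3$: one edge to $r_i$, plus either two clause edges or one clause edge and the $W$-edge; $r_i$ has degree $4$; $c_j$ has degree $\leq 5$; all internal $T$- and $W$-vertices have degree $\geq 3$). Then $G$ is already its own $3$-core with $b=0$, and the reduction collapses. Your plan to ``anchor the chosen true literal'' therefore cannot work as stated, and the difficulty you anticipate in the backward direction is not merely hard but unresolvable in that setup. A minor related point: $W$ does not ``seed the cascade''; its sole role is degree completion for literals appearing in a single clause, and it is attached only to those literals.
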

\begin{proof}
We reduce from the \textsc{Restricted-Planar-3-SAT} problem. For an instance $\phi$ of \textsc{Restricted-Planar-3-SAT} let
$G_{\phi}$ be the associated planar graph. We define two special graphs $T$ and $W$ (see Figure~\ref{fig:t-and-w}) and build
the graph $G$ as follows (see Figure~\ref{fig:planar-k-3}):
\begin{itemize}
\item For each $1\leq i\leq n$ subdivide the edge $r_{i}x_{i}$ and let the newly introduced vertex be $y_i$
\item For each $1\leq i\leq n$ subdivide the edge $r_{i}\overline{x_i}$ and let the newly introduced vertex be
    $\overline{y_i}$.
\item For each $1\leq i\leq [n$ attach a copy of $T$ by identifying its degree two vertex with vertex $r_i$. Call this
    gadget as $R_i$
\item For each $1\leq i\leq n$ if $x_i$ appears in exactly one clause then attach a copy of $W$ by identifying its degree
    one vertex with vertex $x_i$. Call this gadget as $X_i$
\item For each $1\leq i\leq n$ if $\overline{x_i}$ appears in exactly one clause then attach a copy of $W$ by identifying
    its degree one vertex with vertex $\overline{x_i}$. Call this gadget as $\overline{X_i}$
\item For each $1\leq j\leq m$ attach a copy of $T$ by identifying its degree two vertex with vertex $c_j$. Call this
    gadget as $C_j$
\end{itemize}
This completes the construction of the graph $G=(V,E)$. All the gadgets we added are planar, and it is easy to verify that the
planarity is preserved when we construct $G$ from $G_{\phi}$. Let $k=3,\ b=n$ and $p=|V|-2n$. The claim is that $\phi$ is
satisfiable if and only if the instance $(G,b,k,p)$ of $p$-AKC answers YES.

Suppose $\phi$ is satisfiable. For each $1\leq i\leq n$, if $x_i = 1$ in the satisfying assignment for $\phi$ then pick $y_i$
in $B$ and $\overline{x_i},\ \overline{y_i}$ in $B'$. Otherwise pick $\overline{y_i}$ in $B$ and $x_i,\ y_i$ in $B'$. Clearly
$|B|=n$ and $|B'|=2n$. Let $B$ be the set of anchors and set $H=V\setminus B'$. Now the claim is that every vertex $w\in
H\setminus B$ has degree at least three in the induced subgraph $G[H]$. For each $1\leq i\leq n$, exactly one of $y_i$ or
$\overline{y_i}$ is in $B$. Hence $r_i$ (and also each vertex of $R_i$) has degree exactly three in $H$. Consider a literal
$x_i$. We have the following two cases:
\begin{itemize}
\item \underline{$y_i\in B$}: $x_i$ gets one edge from $y_i$. If $x_i$ appears in exactly one clause then it gets one edge
    from that clause vertex and one edge from its neighbor in $X_i$ (and each vertex in $X_i$ has degree at least three in
    $H$). Otherwise $x_i$ gets two edges from the two clause vertices which it appears in.
\item \underline{$\overline{y_i}\in B$}: $\overline{x_i}$ gets one edge from $\overline{y_i}$. If $\overline{x_i}$ appears
    in exactly one clause then it gets one edge from that clause vertex and one edge from its neighbor in $\overline{X_i}$
    (and each vertex in $\overline{X_i}$ has degree at least three in $H$). Otherwise $\overline{x_i}$ gets two edges from
    the two clause vertices which it appears in.
\end{itemize}
Finally consider a clause vertex $c_j$. It has at least one true literal say $x_i$ in it. In addition $c_j$ has two neighbors
in $C_j$, and hence the degree of $c_j$ is at least three in $H$. Consequently each vertex in $C_j$ has degree at least three
in $H$. Therefore with $b=|B|=n$ anchors we can cover a $3$-core of size at least $|V\setminus B'|=|V|-2n=p$, and hence
$(G,b,k,p)$ answers YES.

Suppose that the instance $(G,b,k,p)$ of $p$-AKC answers YES. Let us denote the $3$-core by $H$. Note that we can afford to
not have at most $2n$ vertices in the $3$-core. Each $y_i$ and $\overline{y_i}$ have degree two in $G$: so either we cannot
have them in the $3$-core or we need to pick them as anchors. Also for $i\in [n]$ if we do not pick at least one of $y_i$ or
$\overline{y_i}$ then the vertex $r_i$ also cannot be in the $3$-core. This will lead to a cascade effect and the whole gadget
$R_i$ cannot be the in the $3$-core, which is a contradiction since it has $2n+3$ vertices and we could have left out at most
$n$ vertices from the core. If for some $i\in [n]$ we pick both $y_i$ and $\overline{y_i}$ as anchors then for some $j\neq i$
we cannot pick either of $y_j$ and $\overline{y_j}$ as anchors since the total budget for anchors is at most $n$. Therefore we
must anchor exactly one of $y_i, \overline{y_i}$ for each $1\leq i\leq n$. Let $B$ be the set of anchors. Consider the
assignment $f:\{1,2,\ldots,n\}\rightarrow \{0,1\}$ given by $f(x_i)=1$ if $y_i\in B$ or $f(x_i)=0$ otherwise. We claim that
$f$ is indeed a satisfying assignment for $\phi$. Consider a clause vertex $c_j$. We know that $c_j$ must lie in the $3$-core:
otherwise we lose the entire gadget $C_j$ which has $2n+3$ vertices which is more than our budget. Therefore $c_j$ has an edge
in $G[H]$ to some vertex say $x_i$. If $y_i\notin B$ then $x_i$ can have degree at most two in $G[H]$: either it appears in
exactly one clause and has a copy of $W$ attached to it, or it is adjacent to two clause vertices. Therefore $y_i\in B$ which
implies $f(x_i)=1$, and so the clause $c_j$ is satisfied.

Finally note that the maximum degree of $G$ is five, which can occur if $c_j$ has three literals.
\end{proof}

\subsection{NP-hardness on Planar Graphs for $k\geq 4$}

In this section we show that the \textsc{Anchored $k$-Core} problem is NP-hard on planar graphs for all $k\geq 4$, even in
graphs of maximum degree $k+2$.

\begin{figure}[h]
 \centering
 \includegraphics[height=4.5in]{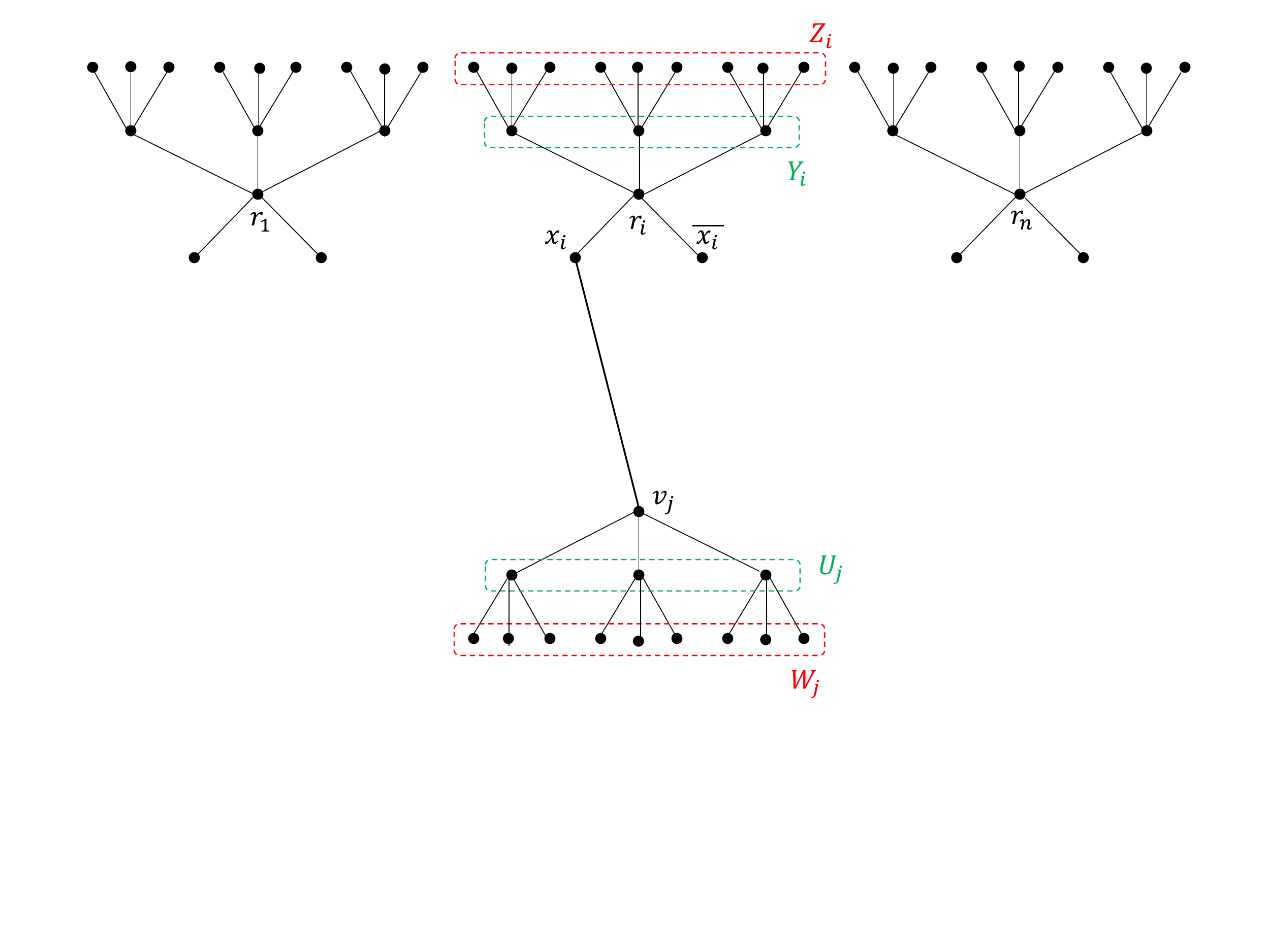}
 \vspace{-30mm}
 \caption{The graph $G$ constructed in Theorem~\ref{thm:planar-k-general} for $k=4$.
 \label{fig:planar-k-general}}
\end{figure}

\begin{theorem}
For any $k\geq 4$ the \textsc{Anchored $k$-Core} problem is NP-hard even on planar graphs of maximum degree $k+2$.
\label{thm:planar-k-general}
\end{theorem}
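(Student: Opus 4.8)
The plan is to adapt the reduction from \textsc{Restricted-Planar-3-SAT} used in Theorem~\ref{thm:planar-k-3}, but to redesign the gadgets so that the whole construction works uniformly for every $k\geq 4$ while keeping the maximum degree at exactly $k+2$ and preserving planarity. The high-level architecture mirrors the $k=3$ case: each variable $x_i$ contributes a ``choice'' gadget that forces the solution to anchor exactly one of two literal-representatives (corresponding to setting $x_i$ true or false), and each clause $C_j$ contributes a gadget that can only survive in the anchored $k$-core if at least one of its literals has been selected. As before, anchoring one of $y_i,\overline{y_i}$ acts as the truth assignment, and the budget $b=n$ forces a genuine \emph{choice} per variable. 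The reason $k\geq 4$ is separable from $k=3$ is that with a larger threshold one has more ``room'' to build rigid gadgets: whereas for $k=3$ the degree-two and degree-one special graphs $T$ and $W$ had to be crafted delicately, for general $k$ one can attach uniform high-threshold ``enforcer'' gadgets whose vertices all have degree exactly $k$ internally except for a single attachment point, so that the entire gadget is forced into (or out of) the core as a block.

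Concretely, first I would build, for each $k\geq 4$, an analogue of the graph $T$: a planar gadget all of whose vertices have degree $\geq k$ within the gadget except for one distinguished attachment vertex of degree $k-1$ (so that a single external edge brings it up to threshold $k$), together with an analogue of $W$ attached to under-used literals to supply their missing incident edges. The key quantitative invariant I want is that each such gadget has strictly more than $b=n$ vertices, so that ``losing'' even one gadget to the unraveling cascade exceeds the slack $p=|V|-(\text{small quantity})$ and is therefore forbidden in any \textsc{YES} instance. Second, I would subdivide the edges $r_ix_i$ and $r_i\overline{x_i}$ to create the degree-two vertices $y_i,\overline{y_i}$ that serve as anchor candidates, exactly as in the $k=3$ proof, and set the parameters $b=n$, $k$ as given, and $p=|V|-cn$ for the appropriate constant $c$ dictated by how many degree-deficient vertices each variable contributes.

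The forward direction (satisfiable $\Rightarrow$ \textsc{YES}) is the routine bookkeeping step: given a satisfying assignment, anchor $y_i$ if $x_i$ is true and $\overline{y_i}$ otherwise, delete the small fixed set of now-unsupportable vertices on the false side, and verify by a degree count that every surviving non-anchor vertex reaches degree $\geq k$ in $G[H]$ — each $r_i$-gadget is supported by its single anchored subdivision vertex, each surviving literal vertex collects its threshold from its variable edge plus its clause/$W$ edges, and each clause vertex collects $k-1$ edges from its own $C_j$-gadget plus at least one edge from a true literal. The reverse direction (\textsc{YES} $\Rightarrow$ satisfiable) is where the gadget invariants do the work: because every gadget exceeds the deletion slack, all $T$- and $W$-gadgets must lie entirely in the core, which forces exactly one of $y_i,\overline{y_i}$ to be anchored per variable (both would overspend the budget, neither would collapse $R_i$), giving a well-defined assignment $f$; then each clause vertex $c_j$ must survive, and $c_j$ can only achieve degree $k$ if one of its incident literal vertices is itself supported, which happens precisely when that literal is set true, so every clause is satisfied.

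The main obstacle I anticipate is the degree-budget tightness: forcing maximum degree exactly $k+2$ while simultaneously making the clause vertex require a ``true'' literal. A clause vertex with three literals must connect to three literal vertices and draw $k-1$ further edges from its $C_j$-gadget, which already pushes its degree toward $k+2$; I must design the clause gadget so that $c_j$ has exactly $k-1$ internal neighbors and that the literal edges are what tip it over threshold, without any vertex ever exceeding degree $k+2$. Equally delicate is ensuring a literal vertex that appears in two clauses still needs its anchored $y_i$ (rather than free-riding on two clause edges plus a $W$-gadget), which is exactly the degree-counting case analysis that must be carried out literal-by-literal as in the $k=3$ proof; getting these local degree counts to close simultaneously for \emph{all} $k\geq 4$ with one uniform gadget family is the crux of the argument.
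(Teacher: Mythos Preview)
Your plan has a genuine structural obstruction that kills it for all $k\geq 6$. You propose to build, for each $k\geq 4$, a planar gadget $T$ in which every vertex has internal degree $\geq k$ except for one attachment vertex of degree $k-1$. But a planar graph on $N$ vertices has at most $3N-6$ edges, so its degree sum is at most $6N-12$; your specification forces degree sum at least $k(N-1)+(k-1)=kN-1$. For $k\geq 6$ the inequality $kN-1\leq 6N-12$ has no positive solution in $N$, so no such planar $T$ exists. The same objection applies to your $W$-gadget. Since you have fixed the anchor budget at $b=n$ and spent it all on the subdivision vertices $y_i,\overline{y_i}$, you cannot rescue the gadgets by anchoring interior vertices either. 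In short, ``self-supporting'' high-minimum-degree planar enforcers are simply unavailable once $k$ exceeds $5$, and your architecture collapses there.

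The paper sidesteps this obstacle by abandoning the $b=n$ regime entirely. Instead of rigid enforcers, it hangs a two-level star off each $r_i$ (a set $Y_i$ of $k-1$ children, each with $k-1$ leaves forming $Z_i$) and similarly off each clause vertex $c_j$ (sets $U_j$, $W_j$), and then \emph{anchors all the leaves}: the budget is set to $b=n\bigl((k-1)^2+1\bigr)+m(k-1)^2$, covering every vertex of $\bigcup_i Z_i\cup\bigcup_j W_j$ plus exactly one literal per variable. The leaves, once anchored, prop up the middle layer $Y_i,U_j$ to degree $k$, which in turn prop up $r_i,c_j$. Crucially, each literal vertex $x_i,\overline{x_i}$ has degree at most $3<k$ in $G$, so it can only sit in the core as an anchor; the counting argument (with $p=|V|-n$) then forces all of $Z_i,W_j$ into $B$ and leaves exactly $n$ anchors to be distributed one-per-variable among the literals, yielding the satisfying assignment. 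No high-degree planar substructure is ever needed, and the maximum degree is $k+2$ (attained at a clause vertex with three literals plus its $k-1$ neighbours in $U_j$).
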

\begin{proof}
Fix any $k\geq 4$. We reduce from the \textsc{Restricted-Planar-3-SAT} problem. For an instance $\phi$ of
\textsc{Restricted-Planar-3-SAT} let $G_{\phi}$ be the associated planar graph. We build a graph $G=(V,E)$ from $G_{\phi}$ as
follows (see Figure~\ref{fig:planar-k-general}):
\begin{enumerate}
\item For each $1\leq i\leq n$
    \begin{itemize}
    \item Add a set $Y_i$ of $k-1$ vertices and make all of them adjacent to $r_i$.
    \item For each vertex $y\in Y_i$ add $k-1$ vertices and make all of them adjacent to $y$. Let $Z_i$ be the set of
        all these $(k-1)^2$ vertices.
    \end{itemize}
\item For each $1\leq j\leq m$
    \begin{itemize}
    \item Add a set $U_j$ of $k-1$ vertices and make all of them adjacent to $v_j$.
    \item For each vertex $u\in U_j$ add $k-1$ vertices and make all of them adjacent to $u$. Let $W_j$ be the set of
        all these $(k-1)^2$ vertices.
    \end{itemize}
\end{enumerate}
Set $b=n((k-1)^2+1)+m(k-1)^2$ and $p=n(k(k-1)+2)+m(k(k-1)+1)= b+nk+mk$. Note that degree of each $x_i$ and $\overline{x_i}$ is
at most three in $G$. We claim that $\phi$ is satisfiable if and only if the instance $(G,b,k,p)$ of $p$-AKC answers YES.

Suppose $\phi$ is satisfiable. For each $1\leq i\leq n$, if $x_i=1$ in the satisfying assignment then select $x_i$ in $B'$ and
$\overline{x_i}$ in $B''$. Else select $\overline{x_i}$ in $B'$ and $x_i$ in $B''$. Let $B=B'\bigcup (\cup_{i=1}^{n} Z_i)
\bigcup (\cup_{j=1}^{m} W_j)$. Let us place the anchors at vertices of $B$. Then $|B|=n+n(k-1)^2+m(k-1)^2=b$. Now the claim is
that $V\setminus B''$ forms a $k$-core. This would conclude the proof since $|V\setminus B''|=|G|-n=p$. For each $1\leq i\leq
n$ the vertex $r_i$ gets one neighbor from either $x_i$ or $\overline{x_i}$ and it has $k-1$ neighbors in $Y_i$. Each vertex
in $Y_i$ has one neighbor in $r_i$ and $k-1$ neighbors in $Z_i$. Each vertex in $U_j$ has one neighbor in $v_j$ and $k-1$
neighbors in $W_j$.
Since there is a satisfying assignment we know that $v_j$ has at least one neighbor in $B'$, and of course has $k-1$ neighbors
in $U_j$. So $V\setminus B''$ forms a $k$-core with $B$ as the anchor set, and the instance $(G,b,k,p)$ of $p$-AKC answers
YES.

Now suppose that the instance $(G,b,k,p)$ of $p$-AKC answers YES. Let us denote the anchors by $B$ and the $k$-core by $H$.
Consider $S=(\cup_{i=1}^{n} \{x_i,\overline{x_i}\})\bigcup (\cup_{i=1}^{n} Z_i) \bigcup (\cup_{j=1}^{m} W_j)$. Note that
$|S|=b+n$. Any vertex in $S$ has degree at most $\max\{k-1,3\}$ in $G$: so if it is present in the $k$-core then it must be an
anchor. Since $p=|V|-n$ at least $|S|-n$ vertices from $S$ must be anchors. Since $|S|-n=b$ these vertices must be the anchor
set say $B$ and the $k$-core is $H=B\cup (V\setminus S)$. Let $z\in Z_i$ for some $1\leq i\leq n$ be adjacent to $y\in Y_i$ in
$G$. If $z\notin B$ then $y$ has at most $k-1$ neighbors in $H$, which contradicts the fact that $Y_i\subseteq H\setminus B$.
Therefore $Z_i\subseteq B$ for every $1\leq i\leq [n]$. Similarly we have $W_j\subseteq B$ for every $1\leq j\leq m$. So now
we can only choose $n$ more anchors from the set $\cup_{i=1}^{n} \{ x_i,\overline{x_i}\}$. Suppose for some $1\leq i\leq n$ we
have both $x_i \notin B$ and $\overline{x_i}\notin B$. Then the vertex $r_i$ has degree at most $k-1$ in $G[H]$, contradicting
the fact that $r_i\in H\setminus B$. Therefore for every $1\leq i\leq n$ at least one of $x_i$ or $\overline{x_i}$ must be in
$B$. As the budget for the anchors is $n$ we know that for every $1\leq i\leq n$ exactly one of $x_i$ or $\overline{x_i}$ is
in $B$. Consider the assignment $f:\{1,2,\ldots,n\}\rightarrow \{0,1\}$ for $\phi$ given by $f(i)=1$ if $x_i\in B$ and 0
otherwise. The claim is that $f$ is a satisfying assignment for $\phi$. Consider any clause $C_j$ of $\phi$. The vertex $v_j$
has exactly $k-1$ neighbors in $U_j$, and hence must have at least one neighbor in $B$ (which appears in the clause $C_j$). If
this neighbor is some $x_i\in B$ then the assignment would set $x_i=1$. Else the neighbor is of the type $\overline{x_i}\in B$
and then our assignment would have set $x_i=0$. Hence $f$ is a satisfying assignment for $\phi$. Finally note that the maximum
degree of $G$ is $k+2$, which can occur if $v_j$ has three literals.
\end{proof}

\section{FPT on Planar Graphs Parameterized by $b$}

In this section we show that the $p$-AKC problem is FPT parameterized by $b$ on planar graphs when $k\geq 7$.

\begin{lemma}
\label{lem:first-order-logic} The problem of checking whether there is an anchored $k$-core such that $q\geq |H|\geq p$ can be
expressed in first-order logic.
\end{lemma}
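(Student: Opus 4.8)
The plan is to write down, for the fixed constant $k$ and the given integers $p \le q$, a single first-order sentence $\psi_{p,q}$ over the graph vocabulary (the binary adjacency relation $E$ and equality) that holds in $G$ precisely when $G$ admits sets $B \subseteq H \subseteq V$ with $|B| \le b$, $p \le |H| \le q$, and $\deg_{G[H]}(v) \ge k$ for every $v \in H \setminus B$. The whole point is that, because $|H|$ is capped by $q$, one may \emph{name every vertex of $H$ by its own existentially quantified variable}: a set of size at most $q$ is described with at most $q$ quantifiers, and this is exactly what brings the property within reach of first-order logic.

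Concretely, I would take $\psi_{p,q}$ to be a disjunction over the target size $m$ of $H$, ranging over $p \le m \le q$, and over the choice of which of the $m$ named vertices are anchors, i.e.\ over all index sets $A \subseteq [m]$ with $|A| \le b$. For a fixed $m$ and a fixed $A$, the corresponding disjunct begins $\exists h_1 \cdots \exists h_m$ and asserts: (i) the $h_i$ are pairwise distinct, via $\bigwedge_{1 \le i < j \le m} h_i \ne h_j$; and (ii) every non-anchor vertex $h_i$ (that is, every $i \notin A$) has at least $k$ neighbours among $h_1,\dots,h_m$. Condition (ii) is itself first-order because $k$ is a fixed constant: ``$h_i$ has at least $k$ neighbours in the named set'' is the finite disjunction $\bigvee \bigwedge_{t=1}^{k} E(h_i, h_{j_t})$ taken over all $k$-element subsets $\{j_1,\dots,j_k\} \subseteq [m]\setminus\{i\}$, and distinctness guarantees these are $k$ \emph{distinct} neighbours. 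Setting $H = \{h_1,\dots,h_m\}$ and $B = \{h_i : i \in A\}$ then realises a feasible pair for $p$-AKC, and conversely any feasible pair of size $m$ satisfies the disjunct for that $m$ and the matching $A$; hence $\psi_{p,q}$ expresses exactly the stated property.

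The step to be careful about — and the reason the statement is phrased with the upper bound $q$ and not merely the lower bound $p$ — is that first-order logic cannot count to an unbounded value. The clause $|H| \ge p$ in isolation is \emph{not} first-order expressible for unbounded $p$, since no sentence of bounded quantifier rank can separate ``at least $p$'' from ``at least $p-1$'' once $p$ grows. It is precisely the cap $|H| \le q$ that lets me enumerate the finitely many admissible sizes $m \in \{p,\dots,q\}$ and quantify each vertex of $H$ individually. Consequently the length of $\psi_{p,q}$ grows with $q$ (and with $b$ and $k$), but it remains a genuine first-order sentence; bounding $q$ by a function of the parameter $b$ is deferred to the next step, where planarity together with $k \ge 7$ will force every feasible $H$ to have size $O(b)$, so that the eventual first-order model-checking call is on a sentence whose size is bounded purely in terms of $b$.
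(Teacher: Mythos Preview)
Your proof is correct, but it is organised differently from the paper's. The paper also disjuncts over the size $i$ of $H$ and existentially quantifies $h_1,\dots,h_i$, but it then existentially quantifies the anchors as \emph{separate} vertex variables $b_1,\dots,b_j$ (with a disjunction over $j\le b$ and a subformula asserting $\{b_1,\dots,b_j\}\subseteq\{h_1,\dots,h_i\}$), and it handles the degree condition with a single universal quantifier $\forall y$ followed by $\exists v_1\cdots\exists v_k$ witnessing the $k$ neighbours. Your construction instead pushes both choices to the meta level: you enumerate the anchor \emph{index set} $A\subseteq[m]$ in the outer Boolean disjunction, and you enumerate the $k$-subset of neighbour indices for each non-anchor, yielding a purely existential sentence with no $\forall$. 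The trade-off is size: the paper's sentence has length polynomial in $q$ (this is noted explicitly), whereas your disjunction over all $A\subseteq[m]$ with $|A|\le b$ contributes $\sum_{a\le b}\binom{m}{a}$ branches and can be exponential in $q$. For the lemma as stated this is immaterial, and for the downstream FPT application it still works since only \emph{some} computable bound in $b$ is required; the paper's encoding simply gives a shorter formula.
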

\begin{proof}
Consider the following formula in first-order logic: \\
$ \phi_{q} = \bigvee_{p \leq i \leq q}\bigl( \exists h_1,h_2, \ldots h_i:\ H_i \wedge \bigvee_{1 \leq j \leq b} \bigl(\exists
b_1, b_2, \ldots b_j\ : B_j \wedge B_{j}H_{i} \wedge \forall y: \bigl(\bigvee_{1 \leq i_1 \leq i}(y = h_{i_1}) \wedge
YB_{j}\bigr) \rightarrow \exists v_1 \ldots v_k:\ V_k \wedge V_{k}H_{i}
\wedge YV_{k}\bigr)\bigr)$ \\
where\\
$H_i = \bigwedge_{1\leq i_1 \neq i_2\leq i} (h_{i_1} \neq h_{i_2})$\\
$B_j = \bigwedge_{1\leq j_1 \neq j_2\leq j} (b_{j_1} \neq b_{j_2})$\\
$B_{j}H_{i} = \bigwedge_{1 \leq j_1 \leq j}(\bigvee_{1 \leq i_1 \leq i}(b_{j_1} = h_{i_1})$\\
$YB_{j} = \bigwedge_{1 \leq j_1 \leq j}(y \neq b_{j_1})$\\
$V_k = \bigwedge_{1\leq k_1 \neq k_2\leq k}(v_{k_1} \neq v_{k_2})$\\
$V_{k}H_{i} = \bigwedge_{1 \leq k_1 \leq k}(\bigvee_{1 \leq i_1 \leq i}(v_{k_1} = h_{i_1}))$\\
$YV_{k} = \bigwedge_{1 \leq k_1 \leq k}(yv_{kg_1} \in E))$

We claim that the formula $\phi_{q}$ correctly expresses the problem of checking whether there is an anchored $k$-core such
that $q\geq |H|\geq p$. The formulae $H_i,\ B_j$ and $B_k$ just check that all the variables in the respective formula are
pairwise distinct. The formula $B_{j}H_i$ checks every anchor is present in the anchored $k$-core $H$. Finally for every $y\in
H\setminus B$ we enforce that that there are at least $k$ elements $v_1,v_2,\ldots,v_k$ which are pairwise distinct, present
in $H$ and adjacent to $y$. It is now easy to see that any solution $H$ such that $q\geq |H|\geq p$ gives a solution to the
formula $\phi_{q}$ and vice versa, i.e., the formula $\phi_{q}$ exactly expresses this problem. Note that the length of
$\phi_{q}$ is poly$(q)$ since $q\geq p\geq b$ and $q\geq k-1$.
\end{proof}

Seese~\cite{seese1996linear} showed that any graph problem expressible in first-order logic can be solved in linear FPT time
on graphs of bounded degree. More formally, let $\cal{X}$ be a graph problem and $\phi_{\cal{X}}$ be a first-order formula for
$\cal{X}$. For a constant $c>0$ consider the graph class $\mathcal{G}_{c} = \{\ G\ |\ \Delta(G)\leq c \}$. Then for every
$G\in \cal{G}$ we can solve $\cal{X}$ in $O(f(|\phi_{\cal{X}}|)\cdot |G|)$ where $f$ is some function. This was later extended
by Dvorak et al.~\cite{dvorak-kral-thomas} to a much richer graph class known as graphs with \emph{bounded expansion}.
We refer
to~\cite{dvorak-kral-thomas,nesetril-mendez} for the exact definitions. However we remark that examples of such graph classes
are graphs of bounded degree, graphs of bounded genus (including planar graphs), graphs that exclude a fixed (topological)
minor, etc. Using these results we can give FPT algorithm parameterized by $b$ on
some classes of sparse graphs when $k$ is sufficiently large. However for the sake of simplicity we just state the result for
planar graphs.

\begin{lemma}
\label{lem:planar-bounded} Let $G$ be a planar graph on $n$ vertices. Let $k\geq 7$ and $m$ be the set of vertices of degree
at least $k$. Then $\frac{m}{n}<\frac{6}{7}$.
\end{lemma}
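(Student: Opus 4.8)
The plan is to prove this by a double counting of edge endpoints, combining the classical edge bound for simple planar graphs with the handshaking lemma. The whole statement is essentially a density argument: a planar graph is sparse, so its degree sum is small, while each of the $m$ high-degree vertices contributes a large amount to that sum; forcing these two facts to agree bounds $m$ from above. I would treat $m$ as the \emph{number} of vertices of degree at least $k$ (so that $m/n$ makes sense).

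First I would recall the standard fact that every simple planar graph on $n \geq 3$ vertices has at most $3n-6$ edges. By the handshaking lemma this gives the upper estimate
\[
\sum_{v \in V(G)} \deg_G(v) \;=\; 2\,|E(G)| \;\leq\; 2(3n-6) \;=\; 6n - 12.
\]
Next I would produce a matching lower estimate that sees only the high-degree vertices. Discarding the nonnegative contribution of all vertices of degree less than $k$, and using that each of the $m$ remaining vertices has degree at least $k \geq 7$, I get
\[
\sum_{v \in V(G)} \deg_G(v) \;\geq\; k\,m \;\geq\; 7m.
\]
Chaining the two inequalities yields $7m \leq 6n - 12 < 6n$, and dividing by $7n$ gives $m/n < 6/7$, which is exactly the claim. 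The hypothesis $k \geq 7$ enters only at this last comparison, and it is precisely what pins the constant to $6/7$: the weaker bound $k \geq 6$ would yield only $m/n < 1$.

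I do not expect a genuine obstacle here; the single point needing care is the side condition $n \geq 3$ attached to the planar edge bound. For $n \in \{1,2\}$ the maximum possible degree in a simple graph is $n-1 \leq 1 < 7 \leq k$, so no vertex can have degree at least $k$, giving $m = 0$ and making the inequality hold vacuously. Thus the argument is complete once the two one-line estimates are combined, and the only subtlety is bookkeeping the small-$n$ degenerate cases separately.
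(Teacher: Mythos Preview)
Your proof is correct and follows essentially the same double-counting route as the paper: bound the degree sum above by $2(3n-6)$ via the planar edge bound and below by the contribution of the $m$ high-degree vertices, then compare. The paper's version keeps a contribution of $1$ from each low-degree vertex (yielding the slightly sharper $6m \leq 5n-12$) whereas you discard those terms entirely, and you additionally handle the degenerate cases $n\leq 2$ that the paper omits; neither difference is substantive.
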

\begin{proof}
Summing up the degrees of the graph gives twice the number of edges. For $v\in G$ we define $\text{deg}'(v)=7$ if
$\text{deg}(v)\geq7$, and $\text{deg}'(v)=1$ otherwise. Note that every $v\in $G satisfies $\text{deg}'(v)\leq \text{deg}(v)$.
It is a well known fact that the maximum number of edges in a $n$ vertex planar graph is $3n-6$. Therefore we have $2(3n-6) =
6n-12 \geq 2|E(G)| = \sum_{v\in G} \text{deg}(v) \geq \sum_{v\in G} \text{deg}'(v) = 7m+(n-m)$. Rearranging we get $6m\leq
5n-12$ which implies $\frac{m}{n}\leq \frac{5n-12}{6n} <\frac{6}{7}$.
\end{proof}

%

Now we show that for $k\geq 7$ the $p$-AKC problem is FPT parameterized by $b$ on planar graphs.

\begin{theorem}
Let $k\geq 7$. Then for the class of planar graphs the $p$-AKC problem can be solved in linear FPT time parameterized by the
number of anchors $b$.
\end{theorem}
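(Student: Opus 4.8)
The plan is to show that in a planar graph, once the number of anchors $b$ is fixed, every anchored $k$-core has size bounded by a function of $b$ alone (with $k$ a constant). This collapses the problem to first-order model checking of a formula of size poly$(b)$, which on planar graphs is solvable in linear FPT time by the bounded-expansion extension of Seese's theorem. The crucial quantitative statement is that $|H| < 7b$ for every anchored $k$-core $H$ that uses at most $b$ anchors.

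First I would establish this bound. Write $H = B \cup S$ with anchor set $B$, $|B|\leq b$, and supercore $S = H\setminus B$, so every vertex of $S$ has degree at least $k$ in $G[H]$. The induced subgraph $G[H]$ is itself planar on $|H|$ vertices, and since $k\geq 7$ we may apply Lemma~\ref{lem:planar-bounded} to $G[H]$: the number of vertices of $G[H]$ of degree at least $k$ is strictly less than $\tfrac{6}{7}|H|$. Every vertex of $S$ is such a vertex, so $|S| < \tfrac{6}{7}|H|$, whence $|H| = |B| + |S| < b + \tfrac{6}{7}|H|$ and therefore $|H| < 7b$. This is the heart of the argument: because a planar graph has no ordinary $k$-core for $k\geq 6$ (indeed Lemma~\ref{lem:planar-bounded} forces the high-degree vertices to be a strict minority), the supercore can only be a bounded blow-up of the anchor set, and so the parameterization by $b$ alone controls the entire solution.

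With the bound in hand the algorithm is immediate. If $p\geq 7b$ we answer NO. Otherwise $p\leq 7b-1$; set $q = 7b-1 \geq p$ and invoke Lemma~\ref{lem:first-order-logic} to obtain a first-order formula $\phi_q$ expressing the existence of an anchored $k$-core $H$ with $p\leq |H|\leq q$. By that lemma $|\phi_q|$ is poly$(q) = $ poly$(b)$, and by the size bound every genuine solution satisfies $|H| < 7b \leq q$, so $\phi_q$ holds in $G$ if and only if the instance is a YES-instance. Finally, planar graphs form a class of bounded expansion, so the extension of Seese's theorem due to Dvor\'{a}k et al.\ solves first-order model checking in time $O(f(|\phi_q|)\cdot |G|)$; substituting $|\phi_q| = $ poly$(b)$ yields running time $O(g(b)\cdot n)$ for a suitable function $g$, i.e.\ linear FPT time parameterized by $b$.

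The main obstacle is precisely the size bound of the second paragraph. Without it, $p$ (and hence the required formula length) could be as large as $n$, and the first-order model checking would no longer be parameterized by $b$ at all. Lemma~\ref{lem:planar-bounded} is exactly what rules this out, and the one point needing care is that it must be applied to the induced subgraph $G[H]$ rather than to $G$ itself, using that planarity and the degree condition on the supercore are both inherited by $G[H]$.
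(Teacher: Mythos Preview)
Your proposal is correct and follows essentially the same route as the paper: apply Lemma~\ref{lem:planar-bounded} to the planar subgraph $G[H]$ to bound $|H|$ by $7b$, then invoke Lemma~\ref{lem:first-order-logic} with $q=O(b)$ and the Dvo\v{r}\'{a}k--Kr\'{a}l--Thomas first-order model-checking result for bounded-expansion classes. Your write-up is slightly more careful (you keep the strict inequality $|H|<7b$ and explicitly dispose of the case $p\geq 7b$), but the argument is the same.
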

\begin{proof}
Fix any $k\geq 7$. Suppose there is a solution $H$ for the $p$-AKC problem and let $B$ be the set of anchors. Each vertex in
$H\setminus B$ has degree at least $k$. Applying Lemma~\ref{lem:planar-bounded} to the planar graph $G[H]$ we have
$\frac{|H|-|B|}{|H|} = \frac{|H\setminus B|}{|H|}< \frac{6}{7}$. Therefore $\frac{|B|}{|H|}\geq \frac{1}{7}$ and so $7b\geq
7|B|\geq |H|$. So we can express the problem as checking whether there is an anchored $k$-core $H$ such that $p\leq |H|\leq
7b$. As shown in Lemma~\ref{lem:first-order-logic} we write the first-order formula $\phi_{7b}$ for this problem. By the
result of Dvorak et al.~\cite{dvorak-kral-thomas} the $p$-AKC problem can be solved in $O(f(7b)\cdot n)$ for some function
$f$, i.e., it can be solved in linear FPT time parameterized by $b$.
\end{proof}


\section{Conclusions and Open Problems}
We studied the complexity of the AKC problem on the class of planar graphs, thus answering the question raised
in~\cite{bhawalkar-icalp}. We showed that the AKC problem is NP-hard on planar graphs, even if the graph has maximum degree
$k+2$. We also improve some fixed-parameter intractability results for the $p$-AKC problem. Finally on the positive side we
show that for all $k\geq 7$ the $p$-AKC problem on planar graphs is FPT parameterized by $b$.

There are still several interesting questions remaining. We mention some of them here: what is the parameterized complexity
status of the problem parameterized by $b$ on planar graphs for $3\leq k\leq 6$?
What happens when we consider the problem on random graphs? Can we get reasonable approximation algorithms on some restricted
graph classes?

\medskip
\textbf{Acknowledgements:} We would like to thank the anonymous referees of AAAI 2013 for helpful comments.

%

%


\bibliography{docsdb}
\bibliographystyle{abbrv}

\end{document}